\theoremstyle{plain}%
\newtheorem{theorem}{Theorem}[section]
\definecolor{darkslate}{rgb}{0.2, 0.2, 0.2} %
\theoremstyle{plain}%
\newtheorem*{remark:unnumbered}[theorem]{Remark}%
\theoremstyle{nonumberplain}%
\newtheorem{proof}{Proof:}%
\definecolor{blue25emph}{rgb}{0, 0, 11}
\definecolor{almostblack}{rgb}{0, 0, 0.3}
\providecommand{\emphi}[1]{}
\renewcommand{\emphi}[1]{\textcolor{blue25emph}{\textbf{\emph{#1}}}}
\providecommand{\emphw}[1]{}%
\renewcommand{\emphw}[1]{{\textcolor{almostblack}{\emph{#1}}}}%
\providecommand{\emphOnly}[1]{}%
\renewcommand{\emphOnly}[1]{\emph{\textcolor{blue25emph}{\textbf{#1}}}}
\newcommand{\SarielThanks}[1]{%
   \thanks{%
      School of Computing and Data Science; %
      University of Illinois; %
      201 N. Goodwin Avenue; %
      Urbana, IL, 61801, USA; %
      \href{mailto:spam@illinois.edu}{sariel@illinois.edu}; %
      \url{http://sarielhp.org/}.%
   #1%
   }%
}
\newcommand{\HLink}[2]{\hyperref[#2]{#1~\ref*{#2}}}
\newcommand{\HLinkSuffix}[3]{\hyperref[#2]{#1\ref*{#2}{#3}}}
\newcommand{\thmlab}[1]{{\label{theo:#1}}}
\newcommand{\thmref}[1]{\HLink{Theorem}{theo:#1}}
\newcommand{\lemlab}[1]{\label{lemma:#1}}
\newcommand{\lemref}[1]{\HLink{Lemma}{lemma:#1}}%
\newcommand{\seclab}[1]{\label{sec:#1}}
\newcommand{\secref}[1]{\HLink{Section}{sec:#1}}
\newcommand{\defrefY}[2]{\hyperref[def:#1]{#2}}
\providecommand{\eqlab}[1]{}%
\renewcommand{\eqlab}[1]{\label{equation:#1}}
\newcommand{\Eqref}[1]{\HLinkSuffix{Eq.~(}{equation:#1}{)}}
\newcommand{\eqrefY}[2]{\hyperref[equation:#2]{#1}}
\newcommand{\dirEdgeY}[2]{\pth{#1, #2}}%
\newcommand{\Set}[2]{\left\{ #1 \;\middle\vert\; #2 \right\}}
\newcommand{\pth}[1]{\mleft(#1\mright)}%
\newcommand{\cardin}[1]{\left\lvert {#1} \right\rvert}%
\newlist{compactenumA}{enumerate}{5}%
\setlist[compactenumA]{itemsep=-0.5ex,topsep=0.5ex,partopsep=1ex,parsep=1ex,%
   label=(\Alph*)}%
\newlist{compactenuma}{enumerate}{5}%
\setlist[compactenuma]{itemsep=-0.5ex,topsep=0.5ex,partopsep=1ex,parsep=1ex,%
   label=(\alph*)}%
\newlist{compactenumI}{enumerate}{5}%
\setlist[compactenumI]{itemsep=-0.5ex,topsep=0.5ex,partopsep=1ex,parsep=1ex,%
   label=(\Roman*)}%
\newlist{compactenumi}{enumerate}{5}%
\setlist[compactenumi]{itemsep=-0.5ex,topsep=0.5ex,partopsep=1ex,parsep=1ex,%
   label=(\roman*)}%
\newlist{compactitem}{itemize}{5}%
\setlist[compactitem]{itemsep=-0.5ex,topsep=0.5ex,partopsep=1ex,parsep=1ex,%
   label=\ensuremath{\bullet}}%
\newcommand{\etal}{\textit{et~al.}\xspace}
\numberwithin{figure}{section}%
\numberwithin{table}{section}%
\numberwithin{equation}{section}%
\newcommand{\LPVC}{(\Term{LPVC})\xspace}%
\newcommand{\xx}{\mathsf{x}}%
\newcommand{\yy}{\mathsf{y}}%
\newcommand{\zz}{\mathsf{z}}%
\newcommand{\Vertices}{V}%
\newcommand{\VerticesX}[1]{\Vertices\pth{#1}}
\newcommand{\Edges}{E}%
\newcommand{\EdgesX}[1]{\Edges\pth{#1}}
\newcommand{\VX}[1]{\VerticesX{#1}}%
\newcommand{\alg}{\texttt{algIS}\xspace}%
\newcommand{\Gr}{G}%
\newcommand{\GrA}{H}%
\newcommand{\GrB}{K}%
\newcommand{\GrC}{N}%
\newcommand{\eps}{{\varepsilon}}%
\newcommand{\ProblemC}[1]{{\textsf{{#1}\index{problem!#1}}}}
\newcommand{\Term}[1]{\textsf{#1}}
\newcommand{\VC}{\Term{VC}\xspace}%
\newcommand{\vcOpt}{\mathrm{vc}^{*}}
\newcommand{\isOpt}{\mathrm{is}^{*}}%
\providecommand{\ComplexityClass}[1]{{{\textcolor[named]{OliveGreen}{%
            \textsc{\textbf{#1}}}}}}
\providecommand{\NPHard}{{\ComplexityClass{NP-Hard}}\xspace}
\providecommand{\NP}{\ComplexityClass{NP}\xspace}
\providecommand{\POLYT}{\ComplexityClass{P}\xspace}
\newcommand{\MWVC}{\Term{MWVC}\xspace}%
\newcommand{\LP}{\Term{LP}\xspace}%
\newcommand{\PTAS}{\Term{PTAS}\xspace}%
\newcommand{\QPTAS}{\Term{QPTAS}\xspace}%
\newcommand{\QQPTAS}{\Term{QQPTAS}\xspace}%
\newcommand{\Sopt}{S_{\mathrm{opt}}}
\newcommand{\valX}[1]{\nu\pth{#1}}
\newcommand{\poly}{\mathrm{poly}}
\begin{document}

\title{Approximately: Independence Implies Vertex Cover}

\author{Sariel Har-Peled\SarielThanks{Work on this paper was partially supported by an NSF AF award CCF-1907400.  %
   }}

\date{\today}

\maketitle

\begin{abstract}
    We observe that a $(1-\eps)$-approximation algorithm to \ProblemC{Independent Set}, that works for any induced subgraph of the input graph, can be used, via a polynomial time reduction, to provide a $(1+\eps)$-approximation to \ProblemC{Vertex Cover}. This basic observation was made before, see \cite{bhr-mvcrg-11}.

    As a consequence, we get a \PTAS for \VC for unweighted pseudo-disks, \QQPTAS\footnote{We refer to an $(1\pm \eps)$-approximation algorithm with running time of the form $n^{O( \poly(\log \log n, 1/\eps))}$ as \QQPTAS.  The more traditional running time $n^{O( \poly(\log n, 1/\eps))}$ is \QPTAS (i.e.,quasi-polynomial time approximation scheme). A \PTAS refers to an approximation algorithm with running time of the form $n^{O( \poly( 1/\eps))}$. Further mysterious acronyms can be provided upon request.} for \VC for unweighted axis-aligned rectangles in the plane, and \QPTAS for \MWVC for weighted polygons in the plane. To the best of our knowledge, all these results are new.
\end{abstract}

\section{Introduction}

\paragraph{Background.}
Given a graph $\Gr=(\Vertices, \Edges)$, a \emphi{vertex cover} (\VC) is a set $C \subseteq \Vertices$, that is adjacent to all the edges of $\Gr$. The problem of computing a minimum \VC is a classical problem that is \NPHard \cite{gj-cigtn-79}, and an easy $2$-approximation algorithm is known (by computing greedily a maximal matching, and using its vertices). Dinur and Safra \cite{s-ohavc-05} showed that no approximation better than $1.3606$ is possible, unless $\POLYT = \NP$. This was later improved to $\sqrt{2} - \eps \approx 1.41421$ by Khot \etal \cite{kms-psggn-18}.  If the unique-game conjecture is true, no approximate better than $2$ is possible in the general case \cite{kr-vcmha-08}.

For a geometric intersection graph, the problem is easier. Erlebach \etal \cite{ejs-ptasg-05} gave a \PTAS for the intersection graph of weighted fat objects. Har-Peled and Quanrud \cite{hq-aapel-17} provided a \PTAS for \VC for the unweighted case for low-density graphs, or polynomial expansion graphs.

\paragraph{The observation.}
Given an instance of \ProblemC{Vertex Cover}, one can reduce it in polynomial time into a dense instance, where the \VC is at least half the vertices in the graph (or half the mass in the weighted case). In such a graph, a $(1-\eps)$-approximation to the independent set implies a $(1+\eps)$-approximation readily to \VC.

\paragraph{Outline.} %
For the sake of completeness, and to provide a relatively self-contained description, we review this reduction (into the dense subgraph), in excruciating detail, in \secref{background}. We emphasize, however, that this reduction is well known -- see \cite{cc-intti-04,cc-crmwv-08} and references therein. The unweighted case is somewhat easier, and is described nicely in Cygan \etal \cite[Chapter 2]{cfklm-pa-15}.  We describe the new result and its applications in \secref{result}.

\section{Reduction of \VC to the dense case}
\seclab{background}

The material covered in this section can be found in \cite{cfklm-pa-15, cfklm-pa-15,cc-intti-04,cc-crmwv-08} and references therein.

Given a graph $\Gr=(\Vertices,\Edges)$, and a weight function $w: \Vertices \rightarrow (0,\infty)$, the \emphi{min weight vertex cover} (\MWVC) is the problem of computing the subset $\Sopt \subseteq \Vertices$, such that $\Sopt$ is adjacent to all edges in $\Gr$, and $\vcOpt(\Gr) = w(\Sopt) = \sum_{v \in \Sopt} w(v)$ is the minimum among all such sets.  The complement of a vertex cover is an independent set, and in particular, we have $\vcOpt(\Gr) = w(\Gr) - \isOpt(\Gr)$, where $\isOpt$ is the weight of the maximum weight independent set in $\Gr$, and $w(\Gr) = \sum_{v \in \Vertices} w(v)$.

A natural starting point for approximation algorithms for \VC is to solve the associated \LP:
\begin{align}
    \eqlab{lp:v:c}%
    &\text{min}
    &&
    {\textstyle\sum_{v \in \Vertices} w(v)x_v}
    \tag{\Term{L{}PVC}}
    \\
    &\text{s.t.}
    && x_u + x_v \geq 1
    && \forall uv \in
    \EdgesX{\Gr}
    &&
    \nonumber
    \\
    & && x_v \geq 0
    &\quad
    & \forall v \in \Vertices.
    \nonumber
\end{align}
For an assignment $\zz = (z_v)_{v \in \Vertices}$, let $\valX{\zz} = \sum_{v \in \Vertices} w(v) z_v$ denote the \emphi{value} of $\zz$.

\newcommand{\LPVCref}{\eqrefY{\LPVC}{lp:v:c}\xspace}
\begin{lemma}
    \lemlab{half}%
    There is an optimal solution for \emph{\LPVCref} that is half-integral. That is, for any $v \in \Vertices$, we have $x_v \in \{ 0,1/2,1\}$.
\end{lemma}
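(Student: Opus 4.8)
The plan is to use the classical Nemhauser–Trotter argument: take any optimal (extreme point, or even arbitrary optimal) solution $\xx=(x_v)$ of \LPVCref and ``round it toward $1/2$'' by splitting the vertices according to whether their fractional value is above, below, or exactly $1/2$. Concretely, set $V_{+}=\{v : x_v>1/2\}$, $V_{-}=\{v : x_v<1/2\}$, and $V_{=}=\{v : x_v=1/2\}$. Define a new assignment $\yy$ by $y_v=1$ for $v\in V_{+}$, $y_v=0$ for $v\in V_{-}$, and $y_v=1/2$ for $v\in V_{=}$. Then $\yy$ is visibly half-integral, and the two things left to check are that $\yy$ is feasible and that $\valX{\yy}\le\valX{\xx}$, which together with optimality of $\xx$ forces $\valX{\yy}=\valX{\xx}$, so $\yy$ is an optimal half-integral solution.

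First I would verify feasibility. Take an edge $uv\in\EdgesX{\Gr}$; we need $y_u+y_v\ge 1$. If either endpoint lies in $V_{+}$, that endpoint already contributes $1$, so we are done. So assume neither is in $V_{+}$, i.e. $x_u\le 1/2$ and $x_v\le 1/2$; combined with $x_u+x_v\ge 1$ this forces $x_u=x_v=1/2$, hence $u,v\in V_{=}$ and $y_u+y_v=1$. That handles all edges.

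Next I would bound the value. Consider perturbing $\xx$ by a small $\delta>0$: let $\xx^{+}$ move every $V_{-}$ vertex down by $\delta$ and every $V_{+}$ vertex up by $\delta$ (leaving $V_{=}$ fixed), and let $\xx^{-}$ do the reverse (move $V_{-}$ up, $V_{+}$ down). For $\delta$ small enough that no coordinate crosses $1/2$ or leaves $[0,\infty)$, both $\xx^{+}$ and $\xx^{-}$ are feasible: an edge can only become tight-to-violated if it has an endpoint that went down, but every vertex that moved down was strictly on one side of $1/2$, and as in the feasibility check its neighbor across any edge is then $\ge 1/2$ and in fact $>1/2-\delta$ after a matching upward move, so the constraint is preserved (here one uses that for an edge with $x_u<1/2$ we must have $x_v>1/2$, so $u\in V_{-}$ is paired with $v\in V_{+}$). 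Now $\valX{\xx}=\tfrac12\valX{\xx^{+}}+\tfrac12\valX{\xx^{-}}$, and by optimality of $\xx$ both $\valX{\xx^{\pm}}\ge\valX{\xx}$, so in fact $\valX{\xx^{+}}=\valX{\xx^{-}}=\valX{\xx}$; equivalently $\sum_{v\in V_{+}}w(v)=\sum_{v\in V_{-}}w(v)$. Since $\yy$ is obtained from $\xx$ by pushing each $V_{+}$ vertex up by $1-x_v$ and each $V_{-}$ vertex down by $x_v$, and one checks the total upward mass equals the total downward mass by the same balancing identity (or, more directly, $\valX{\yy}-\valX{\xx}=\sum_{v\in V_{+}}w(v)(1-x_v)-\sum_{v\in V_{-}}w(v)x_v$, which is nonpositive because moving all the way to $\{0,1\}$ along the feasible segment from $\xx$ through $\xx^{-}$-type directions cannot increase the optimum), we get $\valX{\yy}\le\valX{\xx}$. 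Hence equality, and $\yy$ is the desired optimal half-integral solution.

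The main obstacle is the value-comparison step: one has to argue cleanly that snapping the above-$1/2$ coordinates up to $1$ and the below-$1/2$ coordinates down to $0$ does not increase the objective, without accidentally breaking feasibility. The slick way is the convex-combination trick above (exhibit $\xx$ as the midpoint of two feasible points obtained by opposite $\pm\delta$ shifts, deduce the ``balance'' equality $w(V_{+})=w(V_{-})$ from optimality, then iterate/scale $\delta$ up to the boundary); I expect writing that iteration-to-the-boundary carefully — checking feasibility is maintained all the way until coordinates hit $0$ or $1$ — to be the one place that needs genuine care rather than routine bookkeeping.
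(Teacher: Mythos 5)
Your proof is essentially the paper's argument with the perturbation aimed in the opposite direction. The paper takes $L=\Set{v}{x_v\in(0,1/2)}$ and $H=\Set{v}{x_v\in(1/2,1)}$, moves $L$ up and $H$ down by $\eps=\min_{v\in L\cup H}|x_v-1/2|$, and thereby collapses coordinates \emph{toward} $1/2$, iterating until $L\cup H=\emptyset$; you push the same mass \emph{outward} toward $\{0,1\}$. Both versions hinge on the same key step: optimality of $\xx$, applied to the $\pm\delta$ shifts, forces the $w$-balance between the two sides of $1/2$. Two details in your write-up need tightening.

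First, your sets $V_{+}=\{v : x_v>1/2\}$ and $V_{-}=\{v : x_v<1/2\}$ include coordinates that are already at $1$ and $0$. If some $v\in V_{-}$ has $x_v=0$, then there is \emph{no} $\delta>0$ for which your $\xx^{+}$ (which lowers every $V_{-}$ coordinate) remains nonnegative, and the convex-combination step collapses. The paper sidesteps this by using the open intervals $(0,1/2)$ and $(1/2,1)$, i.e.\ excluding already-integral coordinates from the moving sets; you should do the same, setting $V_{+}=\Set{v}{1/2<x_v<1}$ and $V_{-}=\Set{v}{0<x_v<1/2}$. The snapped point $\yy$ is unaffected, since coordinates at $0$ or $1$ stay put anyway.

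Second, the direct value comparison is not established by what you write. From a single application of the balance $w(V_{+})=w(V_{-})$ one cannot conclude $\sum_{v\in V_{+}}w(v)(1-x_v)\le\sum_{v\in V_{-}}w(v)x_v$, because the displacements $1-x_v$ and $x_v$ are non-uniform across vertices, and the parenthetical ``moving all the way to $\{0,1\}$ cannot increase the optimum'' is circular as stated. The cure is exactly the iteration you flag at the end: move uniformly by the largest $\delta$ that drives some coordinate to $\{0,1\}$ (the balance identity shows the objective is unchanged along the way), delete that coordinate from $V_{\pm}$, re-derive the balance for the shrunken sets, and repeat. Each round strictly decreases $|V_{+}\cup V_{-}|$, so the process terminates in a half-integral optimal solution. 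This iteration is precisely how the paper's proof is organized, only shrinking $|L\cup H|$ by moving toward $1/2$ rather than toward $\{0,1\}$.
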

\begin{proof}
    Consider an optimal solution $\xx = (x_v)_{v\in\Vertices}$ for \LPVCref, Let
    \begin{equation*}
        L = \Set{ v \in \Vertices}{x_v \in (0,1/2)}%
        \qquad\text{and}\qquad%
        H = \Set{ v \in \Vertices}{x_v \in (1/2,1)}.
    \end{equation*}
    If $w(L) < w(H)$, then consider the assignment
    \begin{equation*}
        y_v =
        \begin{cases}
          x_v - \eps & v \in H\\
          x_v + \eps & v \in L \\
          x_v & \text{otherwise},
        \end{cases}
    \end{equation*}
    where $\eps = \min_{ v \in L \cup H} |x_v - 1/2|$. For an edge $uv$, with $u \in L$ and $v \in H$, we have $y_u + y_v = x_u +\eps +x_v - \eps = x_u + x_v \geq 1$. It is easy to verify, in a similar fashion, that $\yy = (y_v)_{v\in \Vertices}$ is a feasible solution for \LPVCref. Furthermore, since $w(L) < w(H)$, we have
    \begin{equation*}
        \valX{\yy}%
        =%
        \sum_{v \in \Vertices }  w(v) x_v - \eps w(H) + \eps w(L) %
        =
        \valX{\xx} + \eps( w(L) -  w(H)) %
        <%
        \valX{\xx},
    \end{equation*}
    which is impossible, by the optimality of $\xx$. A similar argument applies if $w(L) > w(H)$. As such, it must be that $w(L) = w(H)$. But then, the solution $\yy$ is also optimal, and it has one more variable assigned a value of $1/2$ than $\xx$. Repeating this argument now to $\yy$ till both $L$ and $R$ are empty implies the claim.
\end{proof}

\begin{lemma}
    An optimal half-integral solution to the linear program \emph{\LPVCref} can be computed in $O(n^3)$ time, where $n = \cardin{\Vertices}$.
\end{lemma}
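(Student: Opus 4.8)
The plan is to reduce the task to a single minimum $s$--$t$ cut (equivalently, a single maximum flow) computation in an auxiliary capacitated network $\GrC$ built from $\Gr$ in the style of the Nemhauser--Trotter construction. Let $\GrC$ have vertex set $\brc{s,t} \cup \Vertices' \cup \Vertices''$, where $\Vertices'=\Set{v'}{v\in\Vertices}$ and $\Vertices''=\Set{v''}{v\in\Vertices}$ are two disjoint copies of $\Vertices$. For each $v\in\Vertices$ add an arc $s\to v'$ and an arc $v''\to t$, both of capacity $w(v)$; for each edge $uv\in\EdgesX{\Gr}$ add arcs $u'\to v''$ and $v'\to u''$, both of capacity $+\infty$. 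This network has $2n+2$ vertices and $O(n^2)$ arcs, and it can be built in $O(n^2)$ time.

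First I would compute a minimum $s$--$t$ cut $(A,B)$ in $\GrC$, with $s\in A$ and $t\in B$. The cut $\pth{\brc{s},\, \Vertices'\cup\Vertices''\cup\brc{t}}$ has finite capacity $w(\Vertices)=\sum_{v}w(v)$, so the minimum cut is finite; hence no $\infty$-arc crosses it, which means that for every edge $uv\in\EdgesX{\Gr}$ we have $u'\in A \Rightarrow v''\in A$ and $v'\in A \Rightarrow u''\in A$. From such a cut, define $x_v=\tfrac12\pth{\alpha_v+\beta_v}$, where $\alpha_v\in\brc{0,1}$ equals $1$ iff $v'\in B$ and $\beta_v\in\brc{0,1}$ equals $1$ iff $v''\in A$; then $\xx=(x_v)_{v\in\Vertices}$ is half-integral by construction. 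Two short case analyses finish the forward direction. For feasibility: if some edge $uv$ had $x_u+x_v<1$, i.e.\ $\alpha_u+\beta_u+\alpha_v+\beta_v\le 1$, then splitting on whether $u'\in A$ or $u'\in B$ and propagating the two no-$\infty$-arc-crossing implications yields a contradiction in each case. For the value: the only finite arcs that can cross $(A,B)$ are the arcs $s\to v'$ with $v'\in B$ and the arcs $v''\to t$ with $v''\in A$, so the capacity of the cut equals $\sum_v w(v)\alpha_v+\sum_v w(v)\beta_v = 2\valX{\xx}$. Thus $\valX{\xx}=\tfrac12\pth{\text{min-cut value}}$.

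For optimality I would argue in the other direction, using \lemref{half}: fix an optimal half-integral solution $\zz$ of the program \LPVCref and exhibit an $s$--$t$ cut of capacity $2\valX{\zz}$. Take $v'\in B$ iff $z_v\ge 1/2$ and $v''\in B$ iff $z_v\le 1/2$; half-integrality makes this well defined and makes $\alpha_v+\beta_v=2z_v$ for every $v$. No $\infty$-arc crosses this cut: the arc $u'\to v''$ can cross only if $u'\in A$ and $v''\in B$, i.e.\ $z_u<1/2$ and $z_v\le 1/2$, i.e.\ $z_u=0$ and $z_v<1$ by half-integrality, contradicting $z_u+z_v\ge 1$; the arc $v'\to u''$ is symmetric. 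Hence the cut is finite and, by the same accounting as above, its capacity is $\sum_v w(v)\pth{\alpha_v+\beta_v}=2\valX{\zz}$. Therefore the minimum cut value is at most $2\valX{\zz}$, which combined with the previous paragraph shows that the $\xx$ extracted from the minimum cut has $\valX{\xx}$ equal to the optimum of \LPVCref, i.e.\ $\xx$ is an optimal half-integral solution.

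It remains to account for the running time. Constructing $\GrC$ and recovering the cut $(A,B)$ from a maximum flow by one search in the residual network both cost $O(n^2)$, so the bottleneck is the maximum-flow computation on a network with $V=O(n)$ vertices and $O(n^2)$ arcs. Running a maximum-flow algorithm whose running time is $O(V^3)$ independent of the number of arcs --- for instance the Malhotra--Kumar--Maheshwari (Karzanov-style) blocking-flow algorithm, or the FIFO push--relabel algorithm --- yields the claimed $O(n^3)$ bound. I expect this to be the only delicate point: a generic bound such as Dinic's $O(V^2E)$ would degrade to $O(n^4)$ on instances with $\Theta(n^2)$ arcs, so one must specifically invoke an algorithm whose complexity is governed by the vertex count alone. (Equivalently, one may phrase the whole construction as computing a minimum-weight vertex cover in the bipartite ``double cover'' of $\Gr$, which is the classical statement; the network above is just its standard min-cut encoding.)
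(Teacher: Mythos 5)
Your proposal is correct and takes essentially the same route as the paper: both build the bipartite double cover / network with two copies of each vertex, infinite-capacity arcs encoding edges, and $w(v)$-capacity source and sink arcs, read off the half-integral solution from a minimum $s$--$t$ cut, and prove optimality by converting an optimal half-integral LP solution (guaranteed by \lemref{half}) into a cut of matching capacity. The only (minor) difference is that you work directly at the level of cuts while the paper passes through the minimum-weight vertex cover of the bipartite graph, and you are a bit more careful about the running-time bound, correctly noting that the network has $\Theta(n^2)$ arcs so one must invoke a max-flow algorithm with an $O(V^3)$ bound independent of the arc count (MKM or push--relabel) rather than a generic $O(V^2 E)$ bound.
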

\begin{proof}
    Let $\GrB$ be the bipartite graph over the bipartition $U_1 = \Set{u_1}{u \in \Vertices}$ and $U_2 = \Set{u_2}{u \in \Vertices}$. Let
    \begin{equation*}
        \EdgesX{\GrB} = \Set{u_1v_2,\, u_2 v_1}{ uv \in \EdgesX{\Gr}}.
    \end{equation*}
    For any vertex $v_i \in U_1 \cup U_2$, its weight is its original weight $w(v)$.

    There is an associated network flow instance -- adding a source vertex $s$, and a sink vertex $t$.  Here, the source vertex $s$ is connected to all the vertices in $U_1$. An edge $\dirEdgeY{s}{u_1}$, for $u_1 \in U_1$, has capacity $w(u)$. All the edges of $K$ are oriented from $U_1$ to $U_2$, with infinite capacity, and an edge $u_2 t$, for all $u_2 \in U_2$, has capacity $w(u_2) = w(u)$. Let $\GrC$ denote the resulting instance of network flow. It is easy to verify that an $s$-$t$ min-cut in $\GrC$ corresponds to a minimum weight vertex cover in $\GrB$, and vice versa.  As such, compute a max-flow in $\GrC$, and let $f$ denote this flow. Given $f$, the corresponding $s$-$t$ min-cut can be computed in linear time from the residual network flow of $f$ (i.e., min-cut max-flow theorem). This cut can then be readily converted into the desired minimum vertex cover $\Sopt$ in $\GrB$. Observe that $w(\Sopt) = |f|$, where $|f|$ is the value of the flow of $f$. The max-flow computation in this case can be done in $O(n^3)$ time using known algorithms \cite{clrs-ia-01}.

    For a vertex $v \in \VX{\GrB}$, let $\chi(v) =1$ if $v\in \Sopt$, and zero otherwise.  We set $x_v = \bigl( \chi(v_1) + \chi(v_2 ) \bigr) / 2$, for all $v \in \Vertices$.  Let $\xx = (x_v)_{v \in \Vertices}$.  We have
    \begin{equation}
        \valX{\xx}%
        =%
        \sum_{v \in \Vertices} w(v) x_v = \frac{w(\Sopt)}{2}.
        \eqlab{opt:v:c:o}
    \end{equation}
    Consider an edge $uv \in \EdgesX{\Gr}$. Observe that
    \begin{align*}
        x_u + x_v%
        &=%
        \frac{ \chi(u_1) + \chi(u_2) + \chi(v_1) +
           \chi(v_2)}{2}
        =%
        \frac{\chi(u_1) +  \chi(v_2)}{2}
        +
        \frac{\chi(u_2 ) + \chi(v_1 )}{2}
        \\&%
        \geq%
        \frac{1}{2} + \frac{1}{2} = 1,
    \end{align*}
    since $\Sopt$ is a vertex cover of $\GrB$. So $\xx = (x_v)_{v \in \Vertices}$ is a feasible assignment for \LPVCref (and it is also half-integral).

    Consider an optimal half-integral assignment $\yy = (y_v)_{v \in \Vertices}$ for \Term{L{P}VC}, which exists by \lemref{half}. This assignment induces a natural vertex cover for $\GrB$. Indeed, if $x_v = 1/2$ we add $v_1$ to a set $T$. Similarly, if $x_v = 1$, we add $v_1$ and $v_2$ to $T$. We claim that $T$ is a vertex cover for $\GrB$. Indeed, for any edge $uv \in \EdgesX{\Gr}$, we have that $x_u + x_v \geq 1$.  If $x_u=1$ then $u_1,u_2 \in T$ and these two vertices cover the edges $u_1v_2$ and $u_2v_1$. A similar argument applies if $x_v=1$. The remaining possibility is that $x_u = x_v =1/2$, but then $u_1, v_1 \in T$. Which implies that $T$ covers the edges $u_1v_2$ and $u_2v_1$. As such, all the edges of $\GrB$ are covered by $T$.

    Observe that
    \begin{equation*}
        2 \valX{\yy}%
        =%
        2 \sum_{v \in \Vertices} w(v)y_v %
        =%
        w(T)
        \geq%
        w(\Sopt) = 2 \valX{\xx},
    \end{equation*}
    by \Eqref{opt:v:c:o}.  Namely $\valX{\yy} \geq \valX{\xx}$.  We conclude that $\xx$ is an optimal solution for \LPVCref, and it is also half-integral.~
\end{proof}

Given an optimal half-integral solution $\xx = (x_v)_{v \in \Vertices}$ for \LPVCref, one can partition the vertices into three sets:
\begin{compactenumi}
    \smallskip%
    \item $V_0 = \Set{ v \in \Vertices }{ x_v =0}$,

    \smallskip%
    \item $V_{1/2} = \Set{ v \in \Vertices }{ x_v = 1/2}$, and

    \smallskip%
    \item $V_{1} = \Set{ v \in \Vertices }{ x_v =1}$.
\end{compactenumi}
\smallskip%
A few observations about these sets. The set $V_0$ is an independent set in $\Gr$. Indeed, an edge $uv \in \Edges$ with an endpoint $u \in V_0$ and an endpoint in $v \in V_0 \cup V_{1/2}$ has $x_u + x_v \leq 0 + 1/2 = 1/2$, which is impossible. As such, for a vertex $v \in V_0$, any adjacent edge $vu \in \Edges$ must have $u \in V_1$. Namely, there is no edge between a vertex of $V_0$ and a vertex $V_{1/2}$.  This is known as a \emphi{crown decomposition}, with $C=V_0$ being the ``crown'', $R= V_{1/2}$ being the ``body'', and $H = V_1$ be the ``head''.

We need the weighted version of the Nemhauser-Trotter theorem, which we prove next.

\begin{theorem}[Nemhauser-Trotter] %
    \thmlab{nt}%
    Let $\Gr$ be a graph with weights $w: \Vertices \rightarrow (0,\infty)$.  There is a minimum weight vertex cover $S$ of $\Gr$, such that $V_1 \subseteq S \subseteq V_1 \cup V_{1/2}$.
\end{theorem}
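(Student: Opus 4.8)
The plan is to take an arbitrary minimum weight vertex cover $\Sopt$ and repair it into the required shape by deleting its intersection with the crown and inserting all of the head. Concretely, set
\[
   S \;=\; (\Sopt \setminus V_0) \cup V_1 .
\]
By construction $V_1 \subseteq S$, and since $\Vertices = V_0 \cup V_{1/2} \cup V_1$ we also get $S \subseteq V_{1/2} \cup V_1$. So it remains only to verify that $S$ is a vertex cover and that $w(S) \le w(\Sopt) = \vcOpt(\Gr)$, as these together make $S$ a minimum weight vertex cover of the desired form.

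First I would check that $S$ covers every edge. By the crown decomposition established above (no edge joins $V_0$ to $V_0$ or $V_0$ to $V_{1/2}$), every edge of $\Gr$ is of one of four types: $V_0$--$V_1$, $V_{1/2}$--$V_{1/2}$, $V_{1/2}$--$V_1$, or $V_1$--$V_1$. An edge with an endpoint in $V_1$ is covered, since all of $V_1$ lies in $S$. For an edge with both endpoints in $V_{1/2}$, the cover $\Sopt$ contains one of its endpoints, and that endpoint, being in $V_{1/2}$ and hence not in $V_0$, survives into $S$. Thus $S$ is a vertex cover.

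The substance of the proof is the inequality $w(S) \le w(\Sopt)$, which, writing $A = V_1 \setminus \Sopt$ and $B = \Sopt \cap V_0$, is exactly $w(A) \le w(B)$. I would prove this by contradiction, perturbing the optimal half-integral solution $\xx$ underlying the decomposition. The structural fact I would isolate first: since $A$ is disjoint from the vertex cover $\Sopt$, the set $A$ is independent and every neighbor of $A$ lies in $\Sopt$; in particular, any neighbor of $A$ lying in $V_0$ lies in $\Sopt \cap V_0 = B$, while any neighbor of a vertex of $B \subseteq V_0$ lies in $V_1$. Now define $\yy$ by $y_v = x_v - 1/2$ for $v \in A$, $y_v = x_v + 1/2$ for $v \in B$, and $y_v = x_v$ otherwise, so that $\yy$ is again half-integral and equals $1/2$ on $A \cup B$. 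A brief case analysis over the four edge types shows $\yy$ is feasible for \LPVCref: the only constraint that could fail is one coming from an edge from a vertex of $A$ to a vertex of value $0$, i.e.\ a vertex of $V_0$ outside $B$, and the structural fact rules this out, since such a vertex must be in $B$ and hence has its value raised to $1/2$. Since $\valX{\yy} = \valX{\xx} + \tfrac12\bigl(w(B) - w(A)\bigr)$, assuming $w(A) > w(B)$ gives $\valX{\yy} < \valX{\xx}$, contradicting the optimality of $\xx$. Hence $w(A) \le w(B)$, so $w(S) \le w(\Sopt)$, and therefore $S$ is a minimum weight vertex cover with $V_1 \subseteq S \subseteq V_1 \cup V_{1/2}$.

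The step I expect to be the crux is setting up this perturbation correctly -- recognizing that lowering the LP value on $A$ is safe precisely because any risky neighbor (one of LP value $0$) is forced into $B$, where the value is simultaneously raised by the matching amount, and that raising the value on $B$ never endangers any constraint. Everything else is routine bookkeeping.
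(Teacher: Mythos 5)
Your proposal is correct and follows essentially the same argument as the paper: same construction $S = (\Sopt \setminus V_0) \cup V_1$, same reduction of $w(S) \le w(\Sopt)$ to $w(V_1 \setminus \Sopt) \le w(V_0 \cap \Sopt)$, and the same LP-perturbation contradiction. The only cosmetic difference is that you perturb $\xx$ by a fixed $\pm 1/2$ (keeping the solution half-integral), while the paper uses a generic $\eps \in (0,1/2)$; both are sound, and you also spell out the crown-decomposition case analysis for why $S$ covers all edges, which the paper states more tersely.
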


\begin{proof}
    Let $\Sopt$ be the optimal vertex cover, and let $S = (\Sopt \setminus V_0) \cup V_1$. Clearly, $S$ is a vertex cover, as all the edges adjacent to $V_0$ are covered by $V_1$. Since $V_0 \cup V_{1/2} \cup V_1 = \Vertices$, we have $S = V_1 \cup S \subseteq V_{1/2} \cup V_1$.

    Assume, for the sake of contradiction, that $w(S) > w(\Sopt)$. We have
    \begin{align*}
        &w(S)%
        =%
        w(\Sopt) - w\bigl(V_0 \cap \Sopt\bigr) +
        w\bigl(V_1 \setminus \Sopt\bigr) > w(\Sopt)\\
        &\iff
        w(V_0 \cap \Sopt) - w(V_1 \setminus \Sopt) < 0.
    \end{align*}
    Let $ \eps < 1/2$ be some constant, and for any $v\in \Vertices$, let
    \begin{equation*}
        y_v = \begin{cases}
          1 - \eps & v \in V_1 \setminus \Sopt\\
          \eps & v \in V_0 \cap \Sopt\\
          x_v & \mathrm{otherwise.}
        \end{cases}
    \end{equation*}
    An easy case analysis shows that the $\yy = (y_v)_{v \in \Vertices}$ is a feasible solution. Indeed, we need to consider only edges $uv$ with at least one endpoint (say $u$) in $V_1 \setminus \Sopt$, (as these are the only edges that lost value). There are the following cases to consider:
    \begin{compactenumi}
        \smallskip%
        \item $v \in V_1$: We have $y_u + y_v \geq 1 -\eps + 1 -\eps \geq 1$.

        \smallskip%
        \item $v \in V_{1/2}$: We have $y_u + y_v = 1 - \eps + 1/2 \geq 1$.

        \smallskip%
        \item $v \in V_0\cap \Sopt$: We have $y_u + y_v = 1 -\eps +\eps = 1$.

        \smallskip%
        \item $v \in V_0\setminus \Sopt$: Both endpoints of $uv$ are outside $\Sopt$, which is impossible.
    \end{compactenumi}
    \smallskip%
    Observe that $\valX{\yy} = \sum_{v \in \Vertices} y_v = \valX{\xx} + \eps w\pth{V_0 \cap \Sopt } - \eps w\pth{V_1 \setminus \Sopt} < \valX{\xx}$, which is a contradiction to the optimality of $\xx$.
\end{proof}

\begin{lemma}
    \lemlab{reduce}%
    Let $\GrB = \Gr[V_{1/2}]$ be the induced subgraph of $\Gr$ over $V_{1/2}$. Consider any minimum vertex cover $C$ of $\Gr[V_{1/2}]$. Then $C \cup V_1$ is a minimum weight vertex cover of $\Gr$.
\end{lemma}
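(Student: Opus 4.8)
The plan is to verify two things: that $C \cup V_1$ is a feasible vertex cover of $\Gr$, and that no vertex cover of $\Gr$ is lighter. For feasibility, I would take an arbitrary edge $uv \in \EdgesX{\Gr}$ and argue by cases using the crown decomposition established above. Recall that $V_0$ is independent and that there is no edge with one endpoint in $V_0$ and the other in $V_{1/2}$; hence if either endpoint of $uv$ lies in $V_0$, the other endpoint must lie in $V_1$, so the edge is covered by $V_1$. Otherwise both endpoints of $uv$ lie in $V_{1/2} \cup V_1$: if one of them lies in $V_1$ we are again done, and if both lie in $V_{1/2}$ then $uv$ is an edge of $\GrB = \Gr[V_{1/2}]$, hence covered by $C$. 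In every case the edge is covered by $C \cup V_1$.

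For optimality I would invoke \thmref{nt}, which supplies a minimum weight vertex cover $S$ of $\Gr$ with $V_1 \subseteq S \subseteq V_1 \cup V_{1/2}$. Since $V_1$ and $V_{1/2}$ are disjoint, $S' := S \setminus V_1 = S \cap V_{1/2}$. I claim $S'$ is a vertex cover of $\GrB$: every edge of $\GrB$ is an edge of $\Gr$ with both endpoints in $V_{1/2}$, so some endpoint of it lies in $S$, and that endpoint, lying in $V_{1/2}$, lies in $S'$. Since $C$ is a minimum weight vertex cover of $\GrB$, this gives $w(C) \le w(S')$. Combining with disjointness,
\begin{equation*}
    w\pth{C \cup V_1} = w(C) + w(V_1) \le w(S') + w(V_1) = w(S) = \vcOpt(\Gr).
\end{equation*}
As $C \cup V_1$ is itself a vertex cover of $\Gr$, we also have $w\pth{C \cup V_1} \ge \vcOpt(\Gr)$, so equality holds and $C \cup V_1$ is a minimum weight vertex cover of $\Gr$.

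I do not expect a real obstacle here. The only step needing care is the feasibility argument, which genuinely relies on the crown-decomposition fact that $V_0$ has no neighbor in $V_{1/2}$; without it an edge inside $V_0 \cup V_{1/2}$ could escape the cover. The optimality half is then just a direct application of the Nemhauser--Trotter theorem together with the disjointness of $V_1$ from $V_{1/2}$, so it should be essentially mechanical.
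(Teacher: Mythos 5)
Your proof is correct and uses the same key ingredients as the paper's: the crown-decomposition fact that $V_0$ has no neighbors in $V_{1/2}$, and the Nemhauser--Trotter theorem (\thmref{nt}) to obtain an optimal cover sandwiched between $V_1$ and $V_1 \cup V_{1/2}$. The only cosmetic difference is that the paper first passes through the intermediate graph $\GrA = \Gr[V_0 \cup V_{1/2}]$ and then observes that $V_0$ consists of isolated vertices there, whereas you argue directly on $\GrB = \Gr[V_{1/2}]$ and spell out the feasibility case analysis more explicitly; the substance is the same.
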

\begin{proof}
    Let $\GrA= \Gr[V_0 \cup V_{1/2}]$.  Consider any minimum vertex cover $C$ to the graph $\GrA$ and observe that $C \cup V_1$ is a vertex cover for $\Gr$. Similarly, given a minimum vertex cover $\Sopt$ for $\Gr$, consider the equally priced vertex cover $S = (\Sopt \setminus V_0) \cup V_1$ (see \thmref{nt}), and observe that $S$ is a vertex cover of $\Gr$ and $S \setminus V_1$ is a vertex cover for $\GrA$.  We conclude that computing the minimum vertex cover for $\Gr$ is equivalent to computing the vertex cover for $\GrA$. Since $V_0$ is an independent set in $\Gr$, and it is connected to only the vertices of $V_1$, it follows that $V_0$ is a set of isolated vertices in $\GrA$. As such, it is sufficient to compute a minimum vertex cover for $\Gr[V_{1/2}]$. %
\end{proof}

Observe that $\xx$ induces a valid optimal fractional solution for the induced subgraph $\Gr[V_{1/2}]$.

\begin{lemma}
    Consider any fractional optimal solution $\yy$ for the vertex cover of the graph $\Gr[V_{1/2}]$. We have that $\valX{\yy} = \valX{\xx} - w(V_1) = \sum_{v \in V_{1/2}} w(v) /2 = w(V_{1/2})/2$.
\end{lemma}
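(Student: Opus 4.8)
The plan is to split the chain of equalities into two trivialities plus one genuine claim. First I would note that the two rightmost equalities are immediate bookkeeping: since $\xx$ is the half-integral optimum, $x_v = 0$ on $V_0$, $x_v = 1/2$ on $V_{1/2}$, and $x_v = 1$ on $V_1$, so $\valX{\xx} = w(V_{1/2})/2 + w(V_1)$, and subtracting $w(V_1)$ yields exactly $\sum_{v \in V_{1/2}} w(v)/2 = w(V_{1/2})/2$. Hence the only real content is that the optimal fractional value of \LPVCref for the induced subgraph $\Gr[V_{1/2}]$ equals $w(V_{1/2})/2$; since any two optimal solutions have the same value, it suffices to pin down this optimal value and then apply it to the given $\yy$.

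For the upper bound I would exhibit a feasible solution of that value, namely the restriction of $\xx$ to $V_{1/2}$. Every vertex of $\Gr[V_{1/2}]$ then carries value $1/2$, so each edge has endpoint-sum $1$ and the assignment is feasible, with value $w(V_{1/2})/2$. Therefore the optimum is at most $w(V_{1/2})/2$.

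For the matching lower bound I would take an arbitrary feasible solution $\yy$ for $\Gr[V_{1/2}]$ and lift it to an assignment $\zz$ on all of $\Vertices$ by padding: $z_v = 0$ for $v \in V_0$, $z_v = y_v$ for $v \in V_{1/2}$, and $z_v = 1$ for $v \in V_1$. The step I expect to demand the most care — though it is still short — is verifying feasibility of $\zz$, and this is precisely where the crown-decomposition structure noted before \thmref{nt} enters: $V_0$ is independent in $\Gr$ and has no edge to $V_{1/2}$, so every edge of $\Gr$ either lies inside $V_{1/2}$ (covered because $\yy$ is feasible there), lies inside $V_1$, or joins $V_1$ to $V_0 \cup V_{1/2}$; in all but the first case an endpoint has $z$-value $1$. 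Then $\valX{\zz} = \valX{\yy} + w(V_1)$, and optimality of $\xx$ for \LPVCref on $\Gr$ forces $\valX{\yy} + w(V_1) = \valX{\zz} \ge \valX{\xx} = w(V_{1/2})/2 + w(V_1)$, i.e. $\valX{\yy} \ge w(V_{1/2})/2$. Combining the two bounds gives the claimed value for the optimum, and in particular for any fractional optimal $\yy$, which completes the proof.
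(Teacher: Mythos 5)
Your argument is correct and matches the paper's proof in its essential structure: the upper bound comes from restricting $\xx$ to $V_{1/2}$, and the lower bound comes from padding $\yy$ with $1$ on $V_1$ and $0$ on $V_0$ and invoking optimality of $\xx$ for \LPVCref. You have simply made explicit the feasibility check (via the crown-decomposition fact that $V_0$ has no edges to $V_0 \cup V_{1/2}$) and the bookkeeping equalities that the paper compresses into ``this readily implies the claim.''
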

\begin{proof}
    The assignment $\yy$, can be extended into a fractional solution for $\Gr$, by setting $y_v = 1$ if $v \in V_1$, and zero otherwise. Similarly, $\xx$ when restricted to $V_{1/2}$ is a valid \VC solution for $\Gr[V_{1/2}]$. This readily implies the claim, as $\xx$ is the optimal solution for \LPVCref.
\end{proof}

\section{The result and some applications}
\seclab{result}

For a weighted graph $\Gr$, let $\vcOpt(\Gr)$ denote the weight of the minimum weight vertex cover of $\Gr$. Similarly, let $\isOpt(\Gr)$ be the weight of the maximum weight independent set in $\Gr$.

\begin{theorem}
    \thmlab{main}%
    Let $\Gr=(\Vertices,\Edges)$ be a graph, with weights on the vertices, and assume that we are given an algorithm \alg that can compute, in $T(n)$ time, an independent set $I \subseteq \Vertices$, such that $w(I) \geq (1-\eps) \isOpt(\Gr)$. Furthermore, assume that this algorithm works for any induced subgraph of $\Gr$. Then, one can compute, in $O\bigl(n^3 + T(n)\bigr)$ time, a vertex cover $C$ of $\Gr$, such that $w(C) \leq (1+\eps)\vcOpt(\Gr)$.
\end{theorem}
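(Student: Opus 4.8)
The plan is to push the problem onto the dense induced subgraph $\GrB = \Gr[V_{1/2}]$ produced by the crown decomposition of \secref{background}, run \alg on it, and lift the resulting cover back to $\Gr$ via \lemref{reduce}.

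First I would compute an optimal half-integral solution $\xx$ to \LPVCref in $O(n^3)$ time (the second lemma of \secref{background}) and read off the partition $\Vertices = V_0 \cup V_{1/2} \cup V_1$. The decisive structural fact is that $\GrB$ is \emph{dense}: its minimum weight vertex cover already carries at least half of $w(V_{1/2})$. Indeed, by the last lemma of \secref{background} the value of \LPVCref on $\GrB$ equals $w(V_{1/2})/2$, and since \LPVCref is a relaxation of the integral vertex cover problem on $\GrB$ this gives $\vcOpt(\GrB) \geq w(V_{1/2})/2$. Combining this with $\isOpt(\GrB) = w(\GrB) - \vcOpt(\GrB) = w(V_{1/2}) - \vcOpt(\GrB)$ yields
\[
    \isOpt(\GrB) \;\leq\; \tfrac{1}{2} w(V_{1/2}) \;\leq\; \vcOpt(\GrB).
\]
Establishing this inequality, and realizing that it is precisely what trades a $(1-\eps)$-multiplicative error on independent sets for a $(1+\eps)$-multiplicative error on covers, is the heart of the argument; everything else is bookkeeping.

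Next I would invoke \alg on the induced subgraph $\GrB$ — legitimate since \alg works on any induced subgraph of $\Gr$ — obtaining an independent set $I \subseteq V_{1/2}$ with $w(I) \geq (1-\eps)\isOpt(\GrB)$, and set $C_0 = V_{1/2} \setminus I$, which is a vertex cover of $\GrB$. Then
\[
    w(C_0) = w(V_{1/2}) - w(I) \leq w(V_{1/2}) - (1-\eps)\isOpt(\GrB) = \vcOpt(\GrB) + \eps\,\isOpt(\GrB) \leq (1+\eps)\vcOpt(\GrB),
\]
where the last step uses $\isOpt(\GrB) \leq \vcOpt(\GrB)$ from the displayed inequality above.

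Finally I would output $C = C_0 \cup V_1$. Since $V_0$ is independent in $\Gr$ and joined only to $V_1$ (the crown property), $C$ covers every edge of $\Gr$; and applying \lemref{reduce} to a minimum vertex cover of $\GrB$ gives $\vcOpt(\Gr) = \vcOpt(\GrB) + w(V_1)$, a disjoint union since $V_1 \cap V_{1/2} = \emptyset$. Hence
\[
    w(C) = w(C_0) + w(V_1) \leq (1+\eps)\vcOpt(\GrB) + (1+\eps)w(V_1) = (1+\eps)\bigl(\vcOpt(\GrB) + w(V_1)\bigr) = (1+\eps)\vcOpt(\Gr).
\]
For the running time, the half-integral LP solve costs $O(n^3)$, forming the partition and $C$ is linear, and the single call to \alg costs $T(\cardin{V_{1/2}}) \leq T(n)$ (assuming $T$ nondecreasing), for a total of $O(n^3 + T(n))$. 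I do not expect any genuine obstacle beyond the density observation; the one place to be careful is to charge $\eps\,\isOpt(\GrB)$ against $\vcOpt(\GrB)$ (which the density of $\GrB$, not of $\Gr$, makes possible) rather than against $\vcOpt(\Gr)$.
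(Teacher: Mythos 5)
Your proposal is correct and follows essentially the same route as the paper: compute the half-integral LP solution, reduce to the dense induced subgraph $\Gr[V_{1/2}]$, run the independent-set oracle there, take the complement, and lift back via $V_1$. The only cosmetic difference is that you write the slack as $\eps\,\isOpt(\GrB)$ and bound it by $\eps\,\vcOpt(\GrB)$ via $\isOpt(\GrB) \leq \vcOpt(\GrB)$, while the paper writes it as $\eps w$ and bounds it by $2\eps\,\vcOpt(\GrB)$ via $\vcOpt(\GrB) \geq w/2$ — these are the same inequality rearranged.
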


\begin{proof}
    Using the algorithm of \lemref{reduce}, compute a partition of the vertices of $\Gr$ into the three sets $V_0, V_{1/2}, V_1$.  This partition reduces $\Gr$ into an induced subgraph $\GrB = \Gr[V_{1/2}]$, such that it is enough to solve the problem on $\GrB$. Importantly, for the graph $\GrB$, we have the property that the optimal \LP solution (for the relaxation of the \LP) assigns all vertices a value $1/2$. That is $\vcOpt = \vcOpt( \GrB) \geq w/2$, where $w = \sum_{v \in \VX{\GrB} } w(v)$.

    In the following, let $\isOpt = \isOpt(\GrB)$.  Compute an independent set $I$ in $\GrB$ using \alg.  We have that
    \begin{equation*}
        w(I) \geq (1-\eps) \isOpt
        =%
        (1-\eps)(w  - \vcOpt).
    \end{equation*}
    The set $C = \VX{\GrB} - I$ is a vertex cover, and we have
    \begin{align*}
        w(C)%
        &=%
        w - w(I)%
        \leq%
        w
        - (1-\eps)(w - \vcOpt)
        =%
        \eps w + \pth{1 - \eps} \vcOpt
        \\&%
        \leq%
        2 \eps \, \vcOpt + \pth{1 - \eps} \vcOpt
        =%
        (1+\eps) \vcOpt,
    \end{align*}
    since $\vcOpt \geq w/2$. We have that $C \cup V_1$ is a vertex cover for the original graph. Furthermore, by \lemref{reduce}, we have
    \begin{align*}
        w(C \cup V_1)%
        &=%
        w(C) + w(V_1)%
        \leq%
        (1+\eps) \vcOpt
        + w(V_1)
        \\&
        \leq%
        (1+\eps) (w(V_1) + \vcOpt)
        =%
        (1+\eps)  \vcOpt(\Gr),
    \end{align*}
    since $\vcOpt(\Gr) = \vcOpt + w(V_1)$ by \lemref{reduce}.
\end{proof}

\subsection{Applications}

For several cases, efficient $(1+\eps)$-approximations algorithms are known:
\begin{compactenumi}
    \smallskip
    \item A \PTAS for the unweighted pseudo-disks \cite{ch-aamis-12}.

    \smallskip
    \item A \QQPTAS is for unweighted rectangles in the plane \cite{ce-amisr-16}.

    \smallskip
    \item A \QPTAS for weighted simple polygons in the plane \cite{ahw-asiss-19}.
\end{compactenumi}
\smallskip%
Plugging these results into \thmref{main} implies the following.

\begin{theorem}
    We have the following:
    \begin{compactenumi}
        \smallskip%
        \item For the intersection graph of $n$ unweighted pseudo-disk in the plane, a $(1+\eps)$-approximation to the minimum vertex cover can be computed in $n^{O(1/\eps^2)}$ time.

        \smallskip%
        \item For the intersection graph of $n$ unweighted axis-aligned rectangles in the plane, a $(1+\eps)$-approximation to the minimum vertex cover can be computed in $n^{O(\poly(\log \log n, 1/\eps))}$ time.

        \smallskip%
        \item For the intersection graph of $n$ weighted simple polygons in the plane, a $(1+\eps)$-approximation to the minimum weight vertex-cover can be computed in $n^{O(\poly(\log n, 1/\eps))}$ time.
    \end{compactenumi}
\end{theorem}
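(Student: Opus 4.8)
The plan is to invoke \thmref{main} three times, once for each object class, using as the black-box independent-set routine \alg the corresponding approximation scheme cited just above the statement: the \PTAS of \cite{ch-aamis-12} for unweighted pseudo-disks, the \QQPTAS of \cite{ce-amisr-16} for unweighted axis-aligned rectangles, and the \QPTAS of \cite{ahw-asiss-19} for weighted simple polygons. Since \thmref{main} converts any such $(1-\eps)$-approximation for independent set into a $(1+\eps)$-approximation for vertex cover in $O\bigl(n^3 + T(n)\bigr)$ time, the entire argument reduces to (a) checking a hereditariness condition and (b) bookkeeping of running times.

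First I would observe that each of the three graph classes is \emph{hereditary}: an induced subgraph of the intersection graph of a family of pseudo-disks (respectively axis-aligned rectangles, simple polygons) is again the intersection graph of a sub-family, obtained by discarding the objects indexed by the deleted vertices. The reduction underlying \thmref{main} (via \lemref{reduce}) replaces $\Gr$ by the induced subgraph $\Gr[V_{1/2}]$; in the geometric setting this is an intersection graph of the same type on the sub-collection indexed by $V_{1/2}$, carrying (in the weighted case) the inherited weights. Hence the hypothesis of \thmref{main} that \alg ``works for any induced subgraph of $\Gr$'' is automatically satisfied, because each cited scheme already applies to an arbitrary instance of its respective type, in particular to $\Gr[V_{1/2}]$; the final answer is reassembled as $C \cup V_1$ exactly as prescribed by \lemref{reduce}.

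Next I would substitute the three running times. For unweighted pseudo-disks, $T(n) = n^{O(1/\eps^2)}$ by \cite{ch-aamis-12}, and $O\bigl(n^3 + n^{O(1/\eps^2)}\bigr) = n^{O(1/\eps^2)}$ (if the constant in the exponent of the \PTAS is below $3$, pad it up; this only affects constants). For unweighted axis-aligned rectangles, $T(n) = n^{O(\poly(\log\log n, 1/\eps))}$ by \cite{ce-amisr-16}, which dominates $n^3$. For weighted simple polygons, $T(n) = n^{O(\poly(\log n, 1/\eps))}$ by \cite{ahw-asiss-19}, again dominating $n^3$. In each case \thmref{main} outputs a vertex cover of weight at most $(1+\eps)\vcOpt(\Gr)$ within the stated time bound, establishing the three items.

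The only point that requires genuine care — rather than pure bookkeeping — is the interaction between the reduction and the geometry: one must make sure that removing the vertices of $V_1$ corresponds to removing the associated objects, that the leftover instance on $V_{1/2}$ is a legitimate input to the cited scheme (same object type, correct weights), and that $\vcOpt(\Gr) = \vcOpt(\Gr[V_{1/2}]) + w(V_1)$ so that the additive $w(V_1)$ term does not spoil the multiplicative $(1+\eps)$ guarantee. All of this is immediate from heredity together with \thmref{nt} and \lemref{reduce}, but it is the only place where ``intersection graph'' rather than ``abstract graph'' matters; everything else is just plugging three known running times into the $O\bigl(n^3 + T(n)\bigr)$ bound of \thmref{main}.
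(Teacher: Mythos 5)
Your proposal is correct and matches the paper's approach exactly: the paper simply states that plugging the three cited independent-set approximation schemes into \thmref{main} yields the result, and your elaboration of heredity and the running-time bookkeeping is precisely the (routine) verification the paper leaves implicit.
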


\paragraph{Acknowledgments.} %
The author thanks Chandra Chekuri for useful discussions, and Timothy Chan and C. S. Karthik for pointing out relevant references.

\printbibliography%

\end{document}